\let\csname equation*\endcsname\relax
\let\csname endequation*\endcsname\relax
\newtheorem{theorem}{Theorem}
\newtheorem{definition}[theorem]{Definition}
\newtheorem{lemma}[theorem]{Lemma}
\newtheorem{corollary}[theorem]{Corollary}
\newtheorem{remark}[theorem]{Remark}
\newcommand{\cC}{\mathcal{C}}
\newcommand{\cH}{\mathcal{H}}
\newcommand{\cL}{\mathcal{L}}
\newcommand{\cM}{\mathcal{M}}
\newcommand{\cS}{\mathcal{S}}
\newcommand{\cV}{\mathcal{V}}
\DeclareMathOperator{\rank}{rank}
\newcommand{\ketbra}[2]{\ket{#1}\!\bra{#2}}
\newcommand{\Her}[1]{\cL_{\mathrm{H}}(#1)}
\newcommand{\Psd}[1]{\cL_{\mathrm{H}}^+(#1)}
\begin{document}

\title{Derivation of Standard Quantum Theory via State Discrimination}

\author{Hayato Arai}
\address{Graduate School of Mathematics, Nagoya University, Furo-cho, Chikusa-ku, Nagoya, 464-8602, Japan}
\address{(Current Affiliation) Mathematical Quantum Information RIKEN Hakubi Research Team, 
RIKEN Cluster for Pioneering Research (CPR) and RIKEN Center for Quantum Computing (RQC), 
Wako, Saitama 351-0198, Japan.
}

\eads{\mailto{hayato.arai@riken.jp}}

\author{Masahito Hayashi}
\address{School of Data Science, The Chinese University of Hong Kong, Shenzhen, Longgang District, Shenzhen, 518172, China}
\address{International Quantum Academy (SIQA), Shenzhen 518048, China}
\address{Graduate School of Mathematics, Nagoya University, Furo-cho, Chikusa-ku, Nagoya, 464-8602, Japan}

\eads{\mailto{masahito@math.nagoya-u.ac.jp}}

\vspace{10pt}
\begin{indented}
\item[] Fub. 2024.
\end{indented}

\begin{abstract}
It is a key issue to characterize the model of standard quantum theory out of general models by an operational condition.
The framework of General Probabilistic Theories (GPTs) is a new information theoretical approach
to single out standard quantum theory.
It is known that traditional properties, for example, Bell-CHSH inequality, are not sufficient to single out standard quantum theory among possible models in GPTs.
As a more precise property,
we focus on the bound of the performance for an information task called state discrimination in general models.
We give an equivalent condition for outperforming the minimum discrimination error probability under the standard quantum theory given by the trace norm.
Besides, by applying the equivalent condition,
we characterize standard quantum theory out of general models in GPTs by the bound of the performance for state discrimination.
\end{abstract}

\vspace{2pc}
\noindent{\it Keywords\/}: derivation of quantum theory, general probabilistic theories, state discrimination

\submitto{\NJP}

\maketitle

\section{Introduction}
The mathematical model of quantum theory described by Hilbert space expresses our physical systems very well.
However, a foundation of the mathematical model is not completely clarified,
and therefore, many researchers have been discussing a foundation of quantum theory from several viewpoints.
Recently, as research of quantum information theory has flourished,
informational theoretical viewpoints have received much attention.
A modern information theoretical approach is General Probabilistic Theories (GPTs) \cite{
PR1994,Plavala2017,Pawlowski2009,BBB2010,Banik:2012,Stevens:2013,
Barnum.Steering:2013,LNS.et.al.2022,AYH2023,Takakura2022,KMI2009,KNI2010-dist,NKM2010,
KNI2010,Muller2013,BLSS2017,Barnum2019,Arai2019,YAH2020,
Janotta2014,Plavala2021,Short2010,Barnum2012,Barrett2007,CDP2010,CS2015,CS2016,
KBBM2017,Matsumoto2018,Takagi2019,ALP2019,ALP2021,MAB2022,AH2022,NLS2022,PNL2023,RLW2022},
which start with statistics from states and measurements.
Simply speaking,
GPTs deal with all models where the law to obtain measurement outcomes is given by non-negative probability distribution.

The central purpose of studies of GPTs is to derive the model of quantum theory from information theoretical principles.
Especially, a bound of performance for some fundamental information tasks is important
because it can be regarded as an implicit limitation in our physical experimental setting.
Studies of CHSH inequality in GPTs \cite{PR1994,Plavala2017,Pawlowski2009,BBB2010,Banik:2012,Stevens:2013,
Barnum.Steering:2013,LNS.et.al.2022,AYH2023,Takakura2022} are typical trials of such a derivation from operational bounds.
However, it is known that Tsirelson's bound cannot single out the model of quantum theory \cite{BBB2010,Banik:2012,Stevens:2013,Barnum.Steering:2013,LNS.et.al.2022,AYH2023}.
In the current situation,
it is known that mathematical properties about information theoretical objects, for example, symmetry of perfectly distinguishable pure states, can single out quantum theory \cite{KNI2010,Muller2013,BLSS2017,Barnum2019}.
Such derivations are suggestive, but 
their physical meaning is not clear.
On the other hand, a derivation from 
a simple bound of information tasks has a clear meaning but is still open.

In order to single out quantum theory from a bound of information tasks,
we need to find an information task to satisfy the following requirement; 
With respect to this task,
the performance under a general model 
outperforms the performance under the standard quantum theory.
Typical known results of such fundamental information tasks are the results of perfect discrimination of non-orthogonal states \cite{Arai2019,YAH2020}.
The references \cite{Arai2019,YAH2020} show that certain classes of beyond-quantum measurements can perfectly discriminate a pair of two non-orthogonal states.
However, the analyses in the references \cite{Arai2019,YAH2020}
are insufficient to single out quantum theory for the following reasons.
First, they deal only with perfect state discrimination.
Second, they only deal with a type of measurement with a parameter.
In other words, they do not clarify a precise condition when a measurement improves the performance for state discrimination.
The above two reasons prevent us from singling out quantum theory.

This paper aims to resolve the above two weaknesses and to derive standard quantum theory through the performance for state discrimination.
Therefore, this paper deals with 
state discrimination in more general models of GPTs
without imposing perfectness,
and we give an equivalent condition 
for a measurement to outperform 
the state discrimination by measurements in the standard quantum theory.

One of the most important values in state discrimination
is
the total error probability
\begin{align}\label{def:error}
	\mathrm{Err}(\rho_0,\rho_1;p;\bm{M}):=p\Tr \rho_0M_1+(1-p)\Tr \rho_1M_0
\end{align}
with two hypotheses $\rho_0,\rho_1$
generated with probability $p$ and $1-p$
by a measurement $\bm{M}:=\{M_0,M_1\}$.
In standard quantum theory,
i.e., by applying a POVM $\bm{M}$,
a tight bound of $\mathrm{Err}(\rho_0,\rho_1;p;\bm{M})$ is given \cite{Holevo1972,HelstromBook1976,HayashiBook2017} as
\begin{align}\label{eq:trace-norm}
	\mathrm{Err}(\rho_0;\rho_1;p;\bm{M})\ge\frac{1}{2}-\frac{1}{2}\left\|p\rho_0-(1-p)\rho_1\right\|_1.
\end{align}
Hence, we seek an equivalent condition when the minimization of the error probability $\mathrm{Err}(\rho_0;\rho_1;p;\bm{M})$ in a general model is smaller
than the value $\frac{1}{2}-\frac{1}{2}\|p\rho_0-(1-p)\rho_1\|_1$.
In order to deal with the trace norm $\|\cdot\|_1$,
this paper discusses models whose state can be described as a Hermitian matrix, called a \textit{quantum-like model} at first.
This restriction looks strong, but we show that any model with a condition of its dimension essentially satisfies this requirement (Lemma~\ref{theorem:isometry}).

In terms of the comparison of the error $\mathrm{Err}(\rho_0;\rho_1;p;\bm{M})$,
perfect discrimination of non-orthogonal states in \cite{Arai2019,YAH2020} corresponds to the compatibility
of the relations
$\mathrm{Err}(\rho_0;\rho_1;p;\bm{M})=0$ and $\frac{1}{2}-\frac{1}{2}\|\rho_0-\rho_1\|_1>0$
in the case of $p=\frac{1}{2}$.
In this sense,
the references \cite{Arai2019,YAH2020} show the possibility of violating 
the bound of POVMs \eqref{eq:trace-norm}.
In this paper,
we answer a more general question, i.e., ``When does a measurement in a general model violate the bound of POVMs?''
First, we give a general bound of $\mathrm{Err}(\rho_0;\rho_1;p;\bm{M})$ with its equality condition (Theorem~\ref{theorem:performance}).
By using the general bound, we give an equivalent condition for the existence of a tuple of $\rho_0,\rho_1,\bm{M}$ satisfying $\mathrm{Err}(\rho_0;\rho_1;p;\bm{M})<\frac{1}{2}-\frac{1}{2}\|p\rho_0-(1-p)\rho_1\|_1$ in a model in the case of $p=1/2$ (Theorem~\ref{theorem:equivalent}).

Moreover, this paper answers the first purpose of GPTs,
i.e., characterization of standard quantum theory via the quantum bound for state discrimination \eqref{eq:trace-norm}.
By applying our equivalent condition,
we derive standard quantum theory by the existence embedding of state space satisfying the quantum bound \eqref{eq:trace-norm} (Theorem~\ref{theorem:derivation}).
Because the quantum bound \eqref{eq:trace-norm} derives the model of quantum theory,
the performance for state discrimination completely characterizes any other properties in standard quantum thery,
which is a surprising operational meaning of our derivation.
There exist many measures of the performance for information tasks that outperform the limit under the standard quantum theory in certain models of GPTs \cite{PR1994,Pawlowski2009,Barnum.Steering:2013,Arai2019,YAH2020,Short2010,Barnum2012,NLS2022,PNL2023}.
However, all such known results do not completely characterize the model of quantum theory,
i.e.,
such measures sometimes behave in the same way as standard quantum theory.
Our finding is that the quantum bound of the performance for state discrimination is a complete measure (Figure~\ref{figure_class}).

\begin{figure}[h]
	\centering
	\includegraphics[width=8cm]{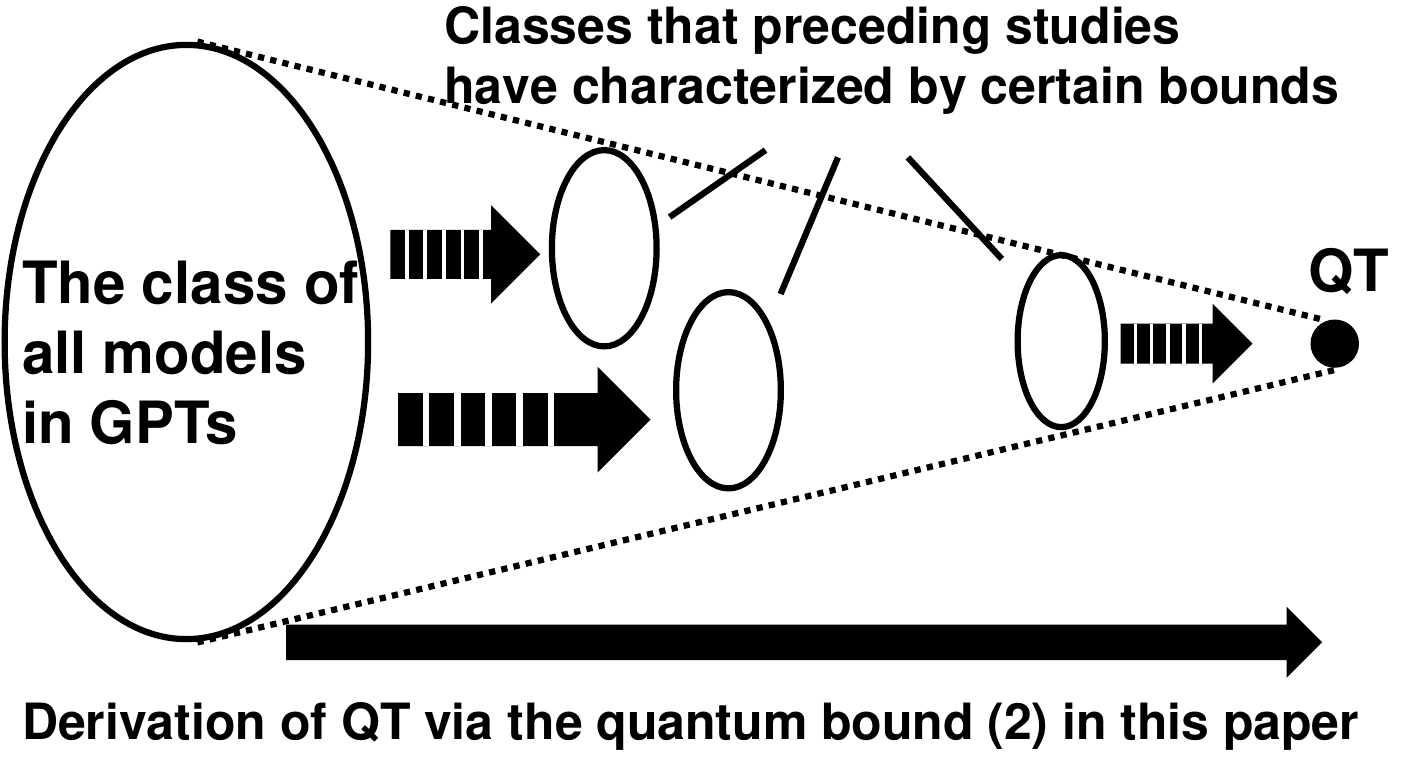}
	\caption{
	Preceding studies have considered many bounds for the performance for many information tasks.
	Some of them have characterized certain classes of models satisfying the same bound as Quantum Theory (QT) \cite{PR1994,Pawlowski2009,Barnum.Steering:2013,Arai2019,Short2010,Barnum2012,NLS2022,PNL2023},
	and others have derived QT out of certain classes of models \cite{YAH2020}.
	In contrast to the incompleteness of characterization of QT in such studies,
	we give a complete characterization of QT via the quantum bound for state discrimination \eqref{eq:trace-norm}.
	}
	\label{figure_class}
\end{figure}

\section{The setting of GPTs}

As a preliminary, we define a positive cone $\cC$ and its dual cone $\cC^\ast$ in finite-dimensional real vector space $\cV$.
A convex set $\cC\subset\cV$ is called a positive cone
if the following three conditions hold:
(i) for any $x\in\cC$ and any $r\ge0$, $rx\in\cC$. (ii) $\cC$ has a non-empty interior. (iii) $\cC\cap(-\cC)=\{0\}$.
For a positive cone $\cC$, the dual cone $\cC^\ast$ is defined as
$\cC^\ast:=\{m\in\cV^\ast\mid m(x)\ge0 \ \forall x\in\cC\}$.
A dual cone is also a positive cone, and we call an element $u\in\cC^\ast$ ordered unit if there exists $r\ge0$ such that $ru-m\in\cC^\ast$ for any $m\in\cC^\ast$.

By using the above mathematical objects, a model of GPTs is defined as follows.
\begin{definition}[A Model of GPTs]\label{def:model}
	A model of GPTs is defined as a tuple $\bm{G}=(\cV,\cC,u)$,
	where $\cV$, $\cC$, and $u$ are a real-vector space, a positive cone, and an \textit{order unit} of $\cC^\ast$, respectively.
\end{definition}
For a model of GPTs $\bm{G}$,
the state space and the measurement space are defined as follows.
\begin{definition}[State Space of GPTs]
	Given a model of GPTs $\bm{G}=(\cV,\cC,u)$,
	the state space of $\bm{G}$ is defined as
	\begin{align}\label{def:state}
		\cS(\bm{G}):=\left\{\rho\in\cC\middle|u(\rho)=1\right\}.
	\end{align}
	Here, we call an element $\rho\in\cS(\bm{G})$ a state of $\bm{G}$.
\end{definition}


\begin{definition}[Measurements of GPTs]\label{def:measurement}
	Given a model  of GPTs $\bm{G}=(\cV,\cC,u)$,
	we say that a family $\{M_i\}_{i\in I}$ is a measurement
	if $M_i\in\cC^\ast$ and $\sum_{i\in I} M_i=u$.
	Besides, the index $i$ is called an outcome of the measurement.
	Here, the set of all measurements is denoted as $\cM(\bm{G})$.
\end{definition}

In this setting,
the state space and the measurement space are always convex.
Also, when a state $\rho\in\cS(\bm{G})$ is measured by a measurement $\{M_i\}\in\cM(\bm{G})$,
the probability $p_i$ to get an outcome $i$ is given as
\begin{align}
	p_i:=M_i(\rho)
\end{align}

Standard quantum theory is a typical example of a model of GPTs,
i.e.,
standard quantum theory is given as the model $\bm{QT}:=(\cL_{\mathrm{H}}(\cH), \cL^+_{\mathrm{H}}(\cH)),\Tr)$,
where $\cL_{\mathrm{H}}(\cH)$ and $\cL^+_{\mathrm{H}}(\cH)$ denote the set of Hermitian matrices on a finite-dimensional Hilbert space $\cH$ and the set of positive semi-definite matrices on $\cH$, respectively.
In this model,
the state space $\cS(\bm{QT})$ is equal to the set of density matrices.
Also, by considering the correspondence from $m \in(\cL^+_{\mathrm{H}}(\cH))^\ast$ to $M\in\cL^+_{\mathrm{H}}(\cH))$ as $m(\rho)=\Tr \rho M$,
the measurement space $\cM(\bm{QT})$ corresponds to the set of Positive-Operator-Valued Measures (POVMs).


Next,
we consider isomorphic maps between two models in order to introduce quantum-like models.
Let $\bm{G}_1=(\cV_1,\cC_1,u_1)$ and $\bm{G}_2=(\cV_2,\cC_2,u_2)$ be models of GPTs with $\dim(\cV_1)=\dim(\cV_2)$.
A linear isomorphic map $f:\cV_1\to\cV_2$ is called an isomorphic map of GPTs
if $f(\cC_1)=\cC_2$ and $u_2\circ f =c u_1$ for a constant $c>0$.
If such an isomorphic map exists,
these two models are equivalent up to normalization.
Actually, for any state $\rho\in\cS(\bm{G}_2)$ and any measurement $\{M_i\}\in\cM(\bm{G}_2)$,
the state $\rho':=cf^{-1}(\rho)\in\cS(\bm{G}_1)$ and the measurement $\{M_i':=\frac{1}{c}M_i\circ f\}\in\cM(\bm{G}_1)$
satisfy
\begin{align}
	M_i'(\rho')=\frac{1}{c}M_i\circ f \left(cf^{-1}(\rho)\right)=M_i(\rho).
\end{align}
In other words, these two models possess the same probabilistic structure.

From now on,
we compare the performance of measurements in general models with that of POVMs.
In this paper, we choose the trace norm $\|\cdot\|_1$ as a measure of the performance for state discrimination.
In order to compare the performance of general measurements with the bound of POVMs by the trace norm \eqref{eq:trace-norm},
we need to consider any model where the trace norm $\|\cdot\|_1$ can be defined.
Therefore, we temporarily restrict our target model to a \textit{quantum-like model},
i.e., a model $\bm{G}=(\Her{\cH},\cC,\Tr)$ satisfying $\cC\subset \Her{\cH}$.
In quantum-like models, we can define the trace norm straightforwardly.
A typical example of quantum-like but not quantum models is $\mathrm{SEP}$ defined by the positive cone $\cC=\{\sum_i x_i\otimes y_i \mid x_i\in\Psd{\cH_A}, \ y_i\in\Psd{\cH_B}\}$.
This model is known as one of composite models of standard quantum systems in GPTs \cite{Barnum.Steering:2013,Arai2019,YAH2020,Janotta2014,ALP2019,ALP2021,PNL2023,PNL2023}.
Hereinafter, we focus on quantum-like models at once.
One can consider this restriction strong,
but the following lemma states that the restriction is only the restriction of the dimension.
\begin{lemma}\label{theorem:isometry}
	Let $\bm{G}=(\cV,\cC,u)$ be a model of GPTs satisfying $\dim(\cV)=d^2$ for an integer $d\ge1$.
	Then,
	there exists an isomorphic map $f:\cV\to\Her{\cH}$ of GPTs from $\bm{G}$ to a quantum-like model $\tilde{\bm{G}}(f):=(\Her{\cH},f(\cC),\Tr)$.
\end{lemma}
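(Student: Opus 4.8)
The plan is to construct the isomorphism $f$ by hand, using the observation that the only genuine constraint in the definition of an isomorphic map of GPTs is that $f$ carry the order unit $u$ to a positive multiple of the trace functional: the requirement $f(\cC)=\cC_2$ is automatic once we \emph{declare} the target cone to be $\cC_2:=f(\cC)$, and the positivity of that cone will be inherited from $\cC$ because $f$ is a linear homeomorphism.

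First I would fix a $d$-dimensional Hilbert space $\cH$, so that $\dim\Her{\cH}=d^2=\dim(\cV)$. Since an order unit is strictly positive on the non-zero elements of $\cC$, the state space $\cS(\bm{G})$ is non-empty, so I pick $\rho_0\in\cS(\bm{G})$, i.e. $\rho_0\in\cC$ with $u(\rho_0)=1$. This yields a direct sum decomposition $\cV=\mathbb{R}\rho_0\oplus\ker u$, while on the target side $\Her{\cH}=\mathbb{R}\frac{I}{d}\oplus\ker\Tr$, where $\Tr(I/d)=1$ and $\ker\Tr$ is the space of traceless Hermitian matrices. Both $\ker u$ and $\ker\Tr$ have real dimension $d^2-1$, so there is a linear isomorphism $g_0:\ker u\to\ker\Tr$. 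I then define $f$ by $f(\rho_0):=I/d$ and $f|_{\ker u}:=g_0$, extended linearly. This $f$ is a linear isomorphism because it sends the decomposition $\mathbb{R}\rho_0\oplus\ker u$ isomorphically onto $\mathbb{R}(I/d)\oplus\ker\Tr$ (note $I/d\notin\ker\Tr$). Writing $x=\alpha\rho_0+y$ with $y\in\ker u$, one computes $\Tr f(x)=\alpha\Tr(I/d)+\Tr g_0(y)=\alpha=u(x)$, hence $\Tr\circ f=u$, which is exactly the condition $u_2\circ f=c\,u_1$ with $c=1$. Moreover $f(\cC)$ is a positive cone, since $f$ preserves the cone and convexity properties, sends interior points to interior points, and satisfies $f(\cC)\cap(-f(\cC))=f\big(\cC\cap(-\cC)\big)=\{0\}$.

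The one substantive step — and the only place where care is genuinely needed — is verifying that $\tilde{\bm{G}}(f)=(\Her{\cH},f(\cC),\Tr)$ is itself a legitimate model of GPTs, i.e. that $\Tr$ is an \emph{order unit} of $(f(\cC))^\ast$. For this I would use the dual map $f^\ast:(\Her{\cH})^\ast\to\cV^\ast$, $(f^\ast m)(x)=m(f(x))$, which restricts to a cone isomorphism $(f(\cC))^\ast\to\cC^\ast$ and satisfies $f^\ast(\Tr)=\Tr\circ f=u$. Given any $m\in(f(\cC))^\ast$ we have $f^\ast m\in\cC^\ast$; since $u$ is an order unit of $\cC^\ast$ there is $r\ge0$ with $r u-f^\ast m\in\cC^\ast$, and applying the inverse cone isomorphism $(f^\ast)^{-1}$ gives $r\Tr-m\in(f(\cC))^\ast$. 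Thus $\Tr$ is an order unit of $(f(\cC))^\ast$, $\tilde{\bm{G}}(f)$ is a quantum-like model, and $f$ is the desired isomorphic map. I do not expect any serious obstacle; the construction is elementary once one notices that only the hyperplane structure determined by $u$ and $\Tr$ matters, and the remaining freedom $g_0$ on the traceless part is completely unconstrained.
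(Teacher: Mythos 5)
Your proof is correct and follows essentially the same route as the paper's: both construct a linear isomorphism intertwining $u$ with $\Tr$ (the paper via a basis $\{x_i\}$, $\{y_i\}$ normalized so that $u(x_i)=\Tr y_i$, you via the decomposition $\cV=\mathbb{R}\rho_0\oplus\ker u$ mapped onto $\mathbb{R}(I/d)\oplus\ker\Tr$) and then observe that $f(\cC)$ inherits convexity, non-empty interior, and pointedness from $\cC$. Your extra verification that $\Tr$ is an order unit of $(f(\cC))^\ast$ is a step the paper's proof omits but which is genuinely needed for $\tilde{\bm{G}}(f)$ to qualify as a model of GPTs, so it is a welcome addition rather than a deviation.
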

Lemma~\ref{theorem:isometry} is mathematically related to Gleason's-type theorems given in \cite{Busch2003,CFMR2004}.
However, the statement of Lemma~\ref{theorem:isometry} is different from those in \cite{Busch2003,CFMR2004},
and therefore, we give the proof of Lemma~\ref{theorem:isometry} in \ref{sect:A-1}.
Here, we remark that the class of quantum-like models includes many important models.
For example, non-unique models of quantum composite systems always satisfy the condition $\dim(\cV)=d^2$.
Also, any model satisfies $\dim(\cV)=d^2$ by considering the composition with an ancillary classical system.

\section{General bound for state discrimination in GPTs}

Now, we consider single-shot state discrimination in a quantum-like model $\bm{G}=(\Her{\cH},\cC,\Tr)$.
For convenience,
we denote the element $f\in(\Her{\cH})^\ast$ as the element $M\in\Her{\cH}$ satisfying $\Tr M x=f(x)$ for any $x\in\cV$.
For example, the ordered unit $\Tr$ is denoted as the identity matrix $I$ through the correspondence.
Therefore, a measurement in $\cM(\bm{G})$ is given as a family $\{M_i\}_{i\in I}$ such that $\Tr M_i\rho\ge0$ for any $\rho\in\cC$ and $\sum_{i\in I} M_i=I$.

State discrimination is formulated similarly to the standard quantum theory.
Given two hypotheses $\rho_0,\rho_1$ for an unknown state $\rho$ in $\cS(\bm{G})$,
a player is required to determine which hypothesis is true
by applying a one-shot measurement $\{M_0,M_1\}\in\cM(\bm{G})$ 
with two outcomes $0$ and $1$.
The player supports the null-hypothesis $\rho=\rho_0$ 
if the outcome $0$ is observed
and supports the alternative hypothesis $\rho=\rho_1$ if the outcome $1$ is observed.
This decision has two types of error probabilities $\Tr \rho_0M_1$ and $\Tr \rho_1M_0$.
Here, we assume that the unknown state $\rho$ is prepared as $\rho_0$ and $\rho_1$ with probability $p$ and $1-p$, respectively.
Therefore, the total error probability of this decision is given as \eqref{def:error}.
We aim to minimize the value \eqref{def:error}.


In the following,
the state discrimination under the standard quantum theory
is referred to 
the standard quantum state discrimination.
Under the standard quantum state discrimination,
the right-hand side of \eqref{eq:trace-norm} gives 
the minimum for optimizing the measurement $\bm{M}$.
The main purpose of this paper is to clarify 
what kind of general model outperforms 
the bound \eqref{eq:trace-norm} of the standard quantum state discrimination.

The references \cite{Arai2019,YAH2020} clarify the existence of 
a beyond-quantum measurement $\bm{M}$ that discriminates non-orthogonal pure states $\rho_0,\rho_1$ perfectly.
As mentioned in Introduction,
such examples mean the violation of the bound \eqref{eq:trace-norm}, i.e., the bound for the standard quantum state discrimination.
However,
these references do not answer the question of when the bound \eqref{eq:trace-norm} is violated.

To answer this question, we introduce a powerful tool for 
a two-outcome measurement $\bm{M}=\{M_0,M_1\}\in\cM(\bm{G})$.
We define the difference between the maximum and minimum eigenvalues
of $M_0$ as
\begin{align}
	r(\bm{M}):=\lambda_{\mathrm{max}}(M_0)-\lambda_{\mathrm{min}}(M_0),
\end{align}
where $\lambda_{\mathrm{max}}(M_0)$ and $\lambda_{\mathrm{min}}(M_0)$ are the maximum and minimum eigenvalues of $M_0$, respectively.
Because $M_1+M_2=I$,
the value $r(\bm{M})$ does not change even when $M_0$ is replaced by $M_1$.
Also, we define the sum of maximum and minimum eigenvalues
of $M_i$ as
\begin{align}
	r'(\bm{M},i):=\lambda_{\mathrm{max}}(M_i)+\lambda_{\mathrm{min}}(M_i).
\end{align}
The value $r'(\bm{M},i)$ depnds on the choise $i$, and the relation $r'(\bm{M},0)+r'(\bm{M},1)=2$ holds.

First, using the values $r(\bm{M})$ and $r'(\bm{M},i)$,
we give a general bound for the error probability of state discrimination as Theorem~\ref{theorem:performance}.
In the following, we denote 
the positive or negative part of $(\rho_0-\rho_1)$ by
$(\rho_0-\rho_1)_{+}$ and $(\rho_0-\rho_1)_{-}$, respectively.

\begin{theorem}\label{theorem:performance}
	Let $\bm{G}=(\Her{\cH},\cC,\Tr)$ be a quantum-like model.
	Any pair of two states $\rho_0,\rho_1\in\cS(\bm{G})$ and any measurement $\bm{M}=\{M_0,M_1\}\in\cM(\bm{G})$ satisfy
	\begin{align}\label{eq:performance}
		\mathrm{Err}(\rho_0;\rho_1;p;\bm{M})\ge \frac{1}{2}-\frac{1}{2}\left\|p\rho_0-(1-p)\rho_1\right\|_1r(\bm{M})
-\frac{1}{2}(2p-1)\left(r'(\bm{M},0)-1\right).
	\end{align}
The equality of \eqref{eq:performance} holds if and only if the following condition holds.
\begin{description}
\item[(A)]
Any normalized vector $\ket{\psi_{-}}$ in 
the range of $(p\rho_0-(1-p)\rho_1)_{-}$
belongs to the eigenspace of $M_0$ with the maximum eigenvalue,
and any normalized vector $\ket{\psi_{+}}$ in 
the range of $(p\rho_0-(1-p)\rho_1)_{+}$
belongs to the eigenspace of $M_0$ with the minimum eigenvalue.
\end{description}
\end{theorem}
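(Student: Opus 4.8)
The plan is to reduce the claimed bound to a pair of one-line spectral estimates after an exact algebraic rewriting of the error. First I would use $M_1 = I - M_0$ together with $\Tr\rho_0 = \Tr\rho_1 = 1$ to rewrite
\begin{align}
\mathrm{Err}(\rho_0;\rho_1;p;\bm{M}) = p - \Tr\big[(p\rho_0-(1-p)\rho_1)M_0\big].
\end{align}
Writing $X := p\rho_0-(1-p)\rho_1$ and using its Jordan decomposition $X = X_+ - X_-$ into orthogonal positive and negative parts, the two scalar identities $\Tr X = 2p-1$ and $\|X\|_1 = \Tr X_+ + \Tr X_-$ give $\Tr X_\pm = \tfrac{1}{2}(\|X\|_1 \pm (2p-1))$. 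This isolates exactly the two quantities $\|X\|_1$ and $2p-1$ that appear on the right-hand side of \eqref{eq:performance}, so the whole problem becomes bounding the single term $\Tr[X M_0]$.

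Next I would bound $\Tr[X M_0]$ from above. Since $M_0$ is Hermitian we have the spectral sandwich $\lambda_{\mathrm{min}}(M_0)\, I \preceq M_0 \preceq \lambda_{\mathrm{max}}(M_0)\, I$; I would stress that this uses \emph{only} Hermiticity and not positivity of $M_0$, which is precisely the feature that lets a beyond-quantum effect $M_0 \in \cC^\ast$ have $r(\bm{M}) > 1$. Pairing these bounds with $X_+\succeq 0$ and $X_-\succeq 0$ yields $\Tr[X_+ M_0] \le \lambda_{\mathrm{max}}(M_0)\Tr X_+$ and $\Tr[X_- M_0] \ge \lambda_{\mathrm{min}}(M_0)\Tr X_-$, hence
\begin{align}
\Tr[X M_0] \le \lambda_{\mathrm{max}}(M_0)\Tr X_+ - \lambda_{\mathrm{min}}(M_0)\Tr X_-.
\end{align}
Substituting the expressions for $\Tr X_\pm$ and regrouping, the right-hand side collapses to $\tfrac{1}{2}\|X\|_1\, r(\bm{M}) + \tfrac{1}{2}(2p-1)\, r'(\bm{M},0)$; inserting this into $\mathrm{Err} = p - \Tr[XM_0]$ and using $p = \tfrac12 + \tfrac12(2p-1)$ reproduces \eqref{eq:performance} exactly.

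For the equality clause I would simply trace back the two spectral estimates: each is tight precisely when the relevant part of $X$ is supported on the corresponding extremal eigenspace of $M_0$, so that the range of $(p\rho_0-(1-p)\rho_1)_+$ lies in one extremal eigenspace and that of $(p\rho_0-(1-p)\rho_1)_-$ in the other, which is condition (A). Because $X_+$ and $X_-$ have orthogonal supports and the two extremal eigenspaces are orthogonal (when $r(\bm{M})>0$), the two conditions are jointly achievable and reduce to the single statement (A). I expect the genuinely delicate points to be bookkeeping rather than conceptual: checking that the regrouping into $r(\bm{M})$ and $r'(\bm{M},0)$ is exact, and handling the degenerate case $r(\bm{M})=0$ (scalar $M_0$) where the two extremal eigenspaces coincide and (A) becomes vacuous. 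The conceptual heart of the argument—and the step I would emphasize—is the observation that discarding the constraint $0\preceq M_0 \preceq I$, which is legitimate for effects in $\cC^\ast$ of a non-quantum cone, is exactly what loosens the spectral bounds and thereby permits beating the trace-norm bound \eqref{eq:trace-norm}.
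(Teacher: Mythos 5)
Your proposal is correct and follows essentially the same route as the paper's: the identical rewriting $\mathrm{Err}=p-\Tr[XM_0]$ with $X:=p\rho_0-(1-p)\rho_1$, the same spectral sandwich $\lambda_{\mathrm{min}}(M_0)I\preceq M_0\preceq\lambda_{\mathrm{max}}(M_0)I$ paired with the Jordan parts of $X$, and the same tightness analysis; the paper merely symmetrizes over $M_0$ and $M_1$ before estimating, which collapses to your single bound on $\Tr[XM_0]$ via $M_0+M_1=I$, so the difference is purely organizational. Your bookkeeping $\Tr X_{\pm}=\tfrac12\left(\|X\|_1\pm(2p-1)\right)$ and the regrouping into $r(\bm{M})$ and $r'(\bm{M},0)$ check out exactly. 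One point to flag: the equality condition your estimates produce places the range of $(p\rho_0-(1-p)\rho_1)_+$ in the \emph{maximum} eigenspace of $M_0$ and the range of $(p\rho_0-(1-p)\rho_1)_-$ in the \emph{minimum} eigenspace, whereas condition (A) as printed assigns them the other way around. Your assignment is the one actually established by the paper's own proof of the ``only if'' direction (which derives $\Tr\left[(X)_+(M_0-\lambda_{\mathrm{max}}(M_0)I)\right]=0$) and is consistent with the worked example, where the positive part of $\rho_0-\rho_1$ is spanned by $(0,1,1,0)^T/\sqrt{2}$, an eigenvector of $M_0$ with eigenvalue $+1=\lambda_{\mathrm{max}}(M_0)$; so the mismatch is a labelling slip in the printed statement of (A), not a gap in your argument. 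Your closing remarks on the degenerate case $r(\bm{M})=0$ and on the role of dropping $0\preceq M_0\preceq I$ are apt but not needed for the proof itself.
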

The proof of Theorem~\ref{theorem:performance} is written in \ref{sect:A-2}.
Theorem~\ref{theorem:performance} reproducts the bound of POVMs \eqref{eq:trace-norm}
because an optimal POVM $\bm{M}$ satisfies $\lambda_{\mathrm{max}}(M_i)=1$ and $\lambda_{\mathrm{min}}(M_i)=0$, i.e., $r(\bm{M})=r'(\bm{M},i)=1$.
Especially in the case of $p=1/2$, the relation between the inequalities \eqref{eq:trace-norm} and \eqref{eq:performance} is more clear because the inequality \eqref{eq:performance}  is written as
\begin{align}
	\mathrm{Err}(\rho_0;\rho_1;\frac{1}{2};\bm{M})\ge \frac{1}{2}-\frac{1}{2}\left\|\frac{1}{2}\rho_0-\frac{1}{2}\rho_1\right\|_1r(\bm{M})
\end{align}
and any POVM $\bm{M}$ satisfies $r(\bm{M})\le 1$.
In a model of GPTs,
the value $r(\bm{M})$ can be larger than 1
because measurement effect $M_i$ can possess negative eigenvalues.
Therefore, in a model of GPTs,
the performance of state discrimination can be improved over the standard quantum.

Also, we remark that the inequality \eqref{eq:performance} is tight.
The following example of the tuple $\rho_0,\rho_1,p,\bm{M}$ satisfies the equality condition.
\textit{Example in separable cone}---%
We focus on the separable cone $\mathrm{SEP}_{2\times2}$ defined as
	\begin{align}
		\mathrm{SEP}_{2\times2}:=\left\{\rho=\sum_i \sigma_i\otimes\sigma_i'\in\Psd{\mathbb{C}^4}\middle| \sigma_i,\sigma_i'\in\Psd{\mathbb{C}^2}\right\},
	\end{align}
which is given as the set of unnormalized separable states in the $2\times2$-dimensional quantum system.
In the following, under the model $\bm{SEP}=(\Her{\mathbb{C}^4},\Tr,\mathrm{SEP}_{2\times2},I)$, we construct states $\rho_0,\rho_1$, probability $p=\frac{1}{2}$, and a measurement $\bm{M}$ 
that satisfy the following properties:
	\begin{align}
	\mathrm{Err}(\rho_0,\rho_1,p,\bm{M})=\frac{3}{8},
~\|p\rho_0-(1-p)\rho_1\|_1=1/8,
~r(\bm{M})=2,
\label{NTV}
	\end{align}
which imply the equality of \eqref{eq:performance}.

First, we choose the following matrices $\rho_0,\rho_1$:
	\begin{align}
		\rho_0:=\frac{1}{8}
		\begin{bmatrix}
			2&0&0&0\\
			0&2&1&0\\
			0&1&2&0\\
			0&0&0&2
		\end{bmatrix},\quad
		\rho_1:=\frac{1}{4}I.
	\end{align}
	The states $\rho_0$ and $\rho_1$ are separable
	because $\rho_0$ can be written as $\rho_0=\sum_{i=1}^7 \ketbra{\psi_i}{\psi_i}$,
	where
	\begin{align}
		&\ket{\psi_1}:=\frac{1}{4}(1,1)\otimes(1,1),\quad
		\ket{\psi_2}:=\frac{1}{4}(1,0)\otimes(1,-1),\nonumber\\
		&\ket{\psi_3}:=\frac{1}{4}(0,1)\otimes(1,-1),\quad
		\ket{\psi_4}:=\frac{1}{4\sqrt{2}}(1,-1)\otimes(1,i),\nonumber\\
		&\ket{\psi_5}:=\frac{1}{4\sqrt{2}}(1,-1)\otimes(1,-i),\quad
		\ket{\psi_6}:=\frac{1}{4\sqrt{2}}(1,i)\otimes(1,i),\nonumber\\
		&\ket{\psi_7}:=\frac{1}{4\sqrt{2}}(1,-i)\otimes(1,-i).
	\end{align}
	Also, the two matrices $\rho_0,\rho_1$ satisfy $\Tr \rho_i=1$ ($i=0,1$).
	Therefore, the two matrices $\rho_0,\rho_1$ belong to the state space $\cS(\bm{SEP})$.
Next, we choose the following family of matrices $\bm{M}=\{M_0,M_1\}$ as:
	\begin{align}
		M_0:=&
		\begin{bmatrix}
			1&0&0&0\\
			0&0&1&0\\
			0&1&0&0\\
			0&0&0&1
		\end{bmatrix}, \quad
		M_1:=&
		\begin{bmatrix}
			0&0&0&0\\
			0&1&-1&0\\
			0&-1&1&0\\
			0&0&0&0
		\end{bmatrix}.
	\end{align}
	First, $M_0+M_1=I$ holds.
	Next,
	because $M_1$ satisfies Positive Partial Transpose (PPT) condition,
	$M_0\in\mathrm{SEP}_{2\times2}^\ast$ \cite{Arai2019}.
	Also, $M_1\in\Psd{\mathbb{C}^4}\subset\mathrm{SEP}_{2\times2}^\ast$,
	and therefore, the family $\bm{M}$ belongs to the measurement space $\cM(\bm{SEP})$.

	Then, the tuple $\rho_0,\rho_1,p,\bm{M}$ satisfies the conditions \eqref{NTV} and the equivalent condition (A) for equality of \eqref{eq:performance}.
	They are easy to check, but we give a detailed calculation on \ref{sect:A-3} for reader's convenience.

Theorem~\ref{theorem:performance} states that
the condition $r(\bm{M})>1$ is necessary for 
outperforming the standard state discrimination
in the case of $p=1/2$.
Moreover, given a measurement $\bm{M}$,
this condition is also sufficient for the existence of a pair of states
whose discrimination 
can be improved by the measurement $\bm{M}$
over the standard quantum state discrimination.
In other words, an equivalent condition for supremacy 
over the standard quantum state discrimination
is simply given as the condition $r(\bm{M})>1$
in the case of $p=1/2$.
\begin{theorem}\label{theorem:equivalent}
	Let $\bm{G}=(\Her{\cH},\cC,\Tr)$ be a quantum-like model.
	Given a measurement $\bm{M}=\{M_0,M_1\}\in\cM(\bm{G})$,
	the following two conditions are equivalent:
	\begin{enumerate}
		\item There exist two states $\rho_0$ and $\rho_1$ in $\cS(\bm{G})$ such that 
		\begin{align}\label{eq:supremacy}
			\mathrm{Err}(\rho_0;\rho_1;\frac{1}{2};\bm{M}) < \frac{1}{2}-\frac{1}{2}\|\frac{1}{2}\rho_0-\frac{1}{2}\rho_1\|_1.
		\end{align}
		\item $r(\bm{M})> 1$.
	\end{enumerate}
\end{theorem}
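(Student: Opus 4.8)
The plan is to extract both implications from Theorem~\ref{theorem:performance} specialised to $p=\frac12$, where \eqref{eq:performance} reads $\mathrm{Err}(\rho_0;\rho_1;\frac12;\bm{M})\ge\frac12-\frac12\|\tfrac12\rho_0-\tfrac12\rho_1\|_1\,r(\bm{M})$ and equality is governed by condition (A). For the implication $(1)\Rightarrow(2)$ I would first observe that the states in hypothesis~$(1)$ cannot coincide: if $\rho_0=\rho_1$ then $\mathrm{Err}=\frac12\Tr\rho_0(M_0+M_1)=\frac12$, whereas the right-hand side of \eqref{eq:supremacy} is also $\frac12$, so \eqref{eq:supremacy} fails. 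Thus $\|\tfrac12\rho_0-\tfrac12\rho_1\|_1>0$. Feeding the lower bound into \eqref{eq:supremacy} yields $\frac12-\frac12\|\tfrac12\rho_0-\tfrac12\rho_1\|_1 r(\bm{M})\le\mathrm{Err}<\frac12-\frac12\|\tfrac12\rho_0-\tfrac12\rho_1\|_1$, and cancelling $\frac12$ and dividing by the positive quantity $\frac12\|\tfrac12\rho_0-\tfrac12\rho_1\|_1$ leaves $r(\bm{M})>1$.

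For the converse $(2)\Rightarrow(1)$, the idea is to manufacture a pair of states saturating \eqref{eq:performance}, so that $r(\bm{M})>1$ immediately pushes the error below the quantum value. Let $\ket{\phi_{\max}}$ and $\ket{\phi_{\min}}$ be unit eigenvectors of $M_0$ for its maximum and minimum eigenvalues; since $r(\bm{M})>1>0$ these eigenvalues are distinct, so $\ket{\phi_{\max}}\perp\ket{\phi_{\min}}$. Set $D:=\ketbra{\phi_{\min}}{\phi_{\min}}-\ketbra{\phi_{\max}}{\phi_{\max}}$, a traceless Hermitian matrix whose positive part is supported in the minimum-eigenvalue eigenspace of $M_0$ and whose negative part is supported in the maximum-eigenvalue eigenspace, i.e.\ precisely matching condition (A). Picking a reference state $\sigma$ in the interior of $\cC$ and putting $\rho_0:=\sigma+\frac t2 D$, $\rho_1:=\sigma-\frac t2 D$ for small $t>0$, one gets $\Tr\rho_i=1$, $\tfrac12(\rho_0-\rho_1)=\tfrac t2 D$ with $\|\tfrac12\rho_0-\tfrac12\rho_1\|_1=t>0$, and condition (A) holds. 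Then the equality case of Theorem~\ref{theorem:performance} gives $\mathrm{Err}=\frac12-\frac t2\,r(\bm{M})<\frac12-\frac t2$, which is exactly \eqref{eq:supremacy}.

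The main obstacle is ensuring that these perturbations $\rho_0,\rho_1$ really lie in $\cC$, which is the only place the geometry of the unknown cone enters. Here I would use that a positive cone has non-empty interior by definition. I would first note that any interior point $x_0$ of $\cC$ satisfies $\Tr x_0>0$: if $\Tr x_0=0$, then $x_0\pm\epsilon y\in\cC$ for every direction $y$ and small $\epsilon>0$ would force $\Tr y\le0$ for all $y$, contradicting $\Tr I=\dim\cH>0$. Rescaling $x_0$ to unit trace furnishes an interior state $\sigma\in\cS(\bm{G})$, and since the interior is open and $D$ is a fixed matrix, $\sigma\pm\frac t2 D\in\cC$ for all sufficiently small $t>0$, while the traceless shift preserves $\Tr=1$. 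This full-dimensionality of $\cC$ is the only structural input beyond Theorem~\ref{theorem:performance}, and it is exactly what makes the perturbation construction valid in an arbitrary quantum-like model.
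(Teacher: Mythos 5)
Your proof is correct and follows essentially the same route as the paper: the forward direction reads off $r(\bm{M})>1$ from the bound \eqref{eq:performance} at $p=\tfrac12$ (after checking $\rho_0\neq\rho_1$ so that one may divide by $\|\tfrac12\rho_0-\tfrac12\rho_1\|_1$), and the converse perturbs an interior state of $\cC$ along the direction joining the extreme eigenspaces of $M_0$, using openness of the interior to stay inside the cone. The only cosmetic difference is that you use rank-one projectors onto single extreme eigenvectors and then invoke the equality condition (A) of Theorem~\ref{theorem:performance}, whereas the paper uses the full spectral projections $E_1,E_d$ and computes the error directly; your choice keeps the perturbation traceless regardless of the eigenspace dimensions, which is marginally tidier.
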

The proof of Theorem~\ref{theorem:equivalent} is also written in \ref{sect:A-4}.
As shown in Theorem~\ref{theorem:equivalent},
a measurement $\bm{M}$ 
outperforms the standard quantum state discrimination 
if and only if the range $r(\bm{M})$ is strictly larger than 1.
As a typical example of such a superior measurement,
a measurement given in \cite{Arai2019} 
distinguishes two non-orthogonal separable states perfectly.
This example satisfies the statement 1 in Theorem~\ref{theorem:equivalent} with $p=\frac{1}{2}$
because 
perfect distinguishability corresponds to the equation $\mathrm{Err}(\rho_0;\rho_1;p;\bm{M}) =0$ and non-orthogonality corresponds to inequality $\|\frac{1}{2}\rho_0-\frac{1}{2}\rho_1\|_1<1$.
In this way,
the statement 2 is a simple equivalent condition for outperforming 
the standard quantum state discrimination.

Here, we simply remark the relation between
the violation of the bound \ref{eq:trace-norm} and the set of POVMs, i.e., the set $\cM(\bm{QT})$.
Even if the measurement $\bm{M}$ does not belong to $\cM(\bm{QT})$,
a measurement $\bm{M}$ does not always violate the bound \ref{eq:trace-norm}.
In other words,
there exists a measurement $\bm{M}$ satisfying $r(\bm{M})\le 1$ and $\bm{M}\not\in\cM(\bm{QT})$.
Theorem~\ref{theorem:equivalent} also ensures that such a measurement never improves the performance for state discrimination in terms of the value $\mathrm{Err}(\rho_0;\rho_1;p;\bm{M})$.

\section{Derivation of quantum theory and quantum simulability}

Now, we go back to the first aim,
i.e., the characterization of standard quantum theory by the bound of the performance for an information task.
Therefore, we do not restrict a model of GPTs.
By applying Theorem~\ref{theorem:equivalent},
we characterize the models isomorphic to standard quantum theory by the quantum bound
without the restriction to quantum-like models.
\begin{theorem}\label{theorem:derivation}
	Let $\bm{G}=(\cV,\cC,u)$ be a model of GPTs.
	The following conditions are equivalent:
	\begin{enumerate}
		\item There exists an isomorphic map $f:\cV\to\Her{\cH}$ from $\bm{G}$ to the model of standard quantum theory $\bm{QT}$.
		\item There exists an isomorphic map $f:\cV\to\Her{\cH}$ from $\bm{G}$ to a quantum like model $\tilde{\bm{G}}(f):=(\Her{\cH},f(\cC),\Tr)$ satisfying the following conditions (A) and (B).
		\begin{itemize}
			\item[(A).] The relation $\cS(\tilde{\bm{G}}(f))\subset\cS(\bm{QT})$ holds. 
			\item[(B).] Any two states $\rho_0,\rho_1\in\cS(\tilde{\bm{G}}(f))$, $0<p<1$, and any measurement $\bm{M}\in\cM(\tilde{\bm{G}}(f))$ satisfy the quantum bound \eqref{eq:trace-norm}.
		\end{itemize}
	\end{enumerate}
\end{theorem}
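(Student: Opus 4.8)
The plan is to prove the two implications separately; the content is concentrated in $(2)\Rightarrow(1)$, while $(1)\Rightarrow(2)$ is essentially a restatement. For $(1)\Rightarrow(2)$ I would take the \emph{same} map $f$. Since an isomorphic map to $\bm{QT}$ satisfies $f(\cC)=\Psd{\cH}$, the associated quantum-like model is $\tilde{\bm{G}}(f)=(\Her{\cH},\Psd{\cH},\Tr)=\bm{QT}$ itself. Hence $\cS(\tilde{\bm{G}}(f))=\cS(\bm{QT})$, which gives (A), and (B) is exactly the Holevo--Helstrom bound \eqref{eq:trace-norm} that holds in standard quantum theory.

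For $(2)\Rightarrow(1)$, write $\tilde{\cC}:=f(\cC)$. It suffices to prove $\tilde{\cC}=\Psd{\cH}$, because then $\tilde{\bm{G}}(f)=\bm{QT}$ and the given $f$ already witnesses condition 1. First I would convert (A) into the cone inclusion $\tilde{\cC}\subseteq\Psd{\cH}$: any nonzero $x\in\tilde{\cC}$ has $\Tr x>0$ (as $\Tr$ is an order unit of $\tilde{\cC}^\ast$), so $x/\Tr x\in\cS(\tilde{\bm{G}}(f))\subseteq\cS(\bm{QT})\subseteq\Psd{\cH}$, whence $x\in\Psd{\cH}$. Dualizing and using self-duality of the positive-semidefinite cone yields $\tilde{\cC}^\ast\supseteq\Psd{\cH}$, so the goal reduces to the reverse inclusion $\tilde{\cC}^\ast\subseteq\Psd{\cH}$.

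The reverse inclusion is where Theorem~\ref{theorem:equivalent} enters. Specializing (B) to $p=\tfrac12$ says that no measurement admits a pair of states strictly violating the bound; by the equivalence in Theorem~\ref{theorem:equivalent} this forces $r(\bm{M})\le1$ for every $\bm{M}\in\cM(\tilde{\bm{G}}(f))$. I would then argue by contradiction: if some $N\in\tilde{\cC}^\ast\setminus\Psd{\cH}$ existed, then $\lambda_{\mathrm{min}}(N)=-\delta<0$, and one checks $\mu:=\lambda_{\mathrm{max}}(N)>0$ (otherwise $N$ would be negative semidefinite and hence vanish on the full-dimensional cone $\tilde{\cC}$, forcing $N=0$). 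Setting $M_0:=N/\mu$ gives $M_0\in\tilde{\cC}^\ast$, while $I-M_0$ has minimum eigenvalue $1-\mu/\mu=0$, so $I-M_0\in\Psd{\cH}\subseteq\tilde{\cC}^\ast$; thus $\{M_0,I-M_0\}\in\cM(\tilde{\bm{G}}(f))$ is an admissible measurement with $r(\{M_0,I-M_0\})=(\mu+\delta)/\mu=1+\delta/\mu>1$, a contradiction. Hence $\tilde{\cC}^\ast=\Psd{\cH}$, so $\tilde{\cC}=\tilde{\cC}^{\ast\ast}=\Psd{\cH}$ (using that $\tilde{\cC}$ is a closed positive cone), completing $(2)\Rightarrow(1)$.

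The main obstacle is the last step: one must manufacture, from any non-positive element of the dual cone, a genuine two-outcome measurement with $r>1$, and simultaneously be careful that Theorem~\ref{theorem:equivalent} is an equivalence about the \emph{existence} of strictly violating states, so that the universally-quantified non-strict bound (B) is precisely what rules out $r(\bm{M})>1$. The normalization $M_0=N/\mu$ is the natural choice, since it pushes $I-M_0$ exactly onto the boundary of $\Psd{\cH}$ while maximizing the spread $r$, and it is the cleanest way to certify membership in $\cM(\tilde{\bm{G}}(f))$.
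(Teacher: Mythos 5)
Your proof is correct and follows essentially the same route as the paper's: reduce to showing $f(\cC)=\Psd{\cH}$, use condition (A) plus cone duality to get $f(\cC)^\ast\supseteq\Psd{\cH}$, and rule out any $N\in f(\cC)^\ast\setminus\Psd{\cH}$ by normalizing it to $M_0:=N/\lambda_{\mathrm{max}}(N)$ so that $\{M_0,I-M_0\}$ is a valid measurement with $r>1$, contradicting condition (B) via Theorem~\ref{theorem:equivalent}. The only differences are organizational (you argue directly rather than by contraposition and do not factor out the intermediate statement the paper calls Theorem~\ref{theorem:derivation-1}), and you share with the paper the same implicit reliance on closedness of the cone when bidualizing.
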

The proof of Theorem~\ref{theorem:derivation} is written in \ref{sect:A-5}.
Here, we remark that there always exists an isomorphic map $f$ satisfying $\cS(\tilde{\bm{G}}(f))\subset\cS(\bm{QT})$ if $\dim(\cV)=d^2$
because $f(\cC)$ is spaned by $d^2-1$ linearly independent elements and we can take $d^2-1$ linearly independent elements in $\Psd{\cH}$.

Theorem~\ref{theorem:derivation} implies that
the state space of a model non-isomorphic to the model of standard quantum theory cannot be isometry-embedded in the standard quantum state space
with satisfying the quantum bound \eqref{eq:trace-norm}.
This statement operationally means that
a beyond-quantum model always outperforms state discrimination.
In contrast,
once we can embed the state space of a model in the standard quantum state space
with satisfying the quantum bound \eqref{eq:trace-norm},
the model must be a model of standard quantum theory
even though the measurement space is restricted only by the performance for state discrimination.
Especially, in the sence of embedding, the quantum bound \eqref{eq:trace-norm} implies all other properties in standard quantum theory.

\begin{remark}
Here, we emphasize that Theorem~\ref{theorem:derivation} is not trivial statement even with an isomorphic embedding of state space.
When we consider a isomorphic map $f:\cV\to\Her{\cH}$ from $\bm{G}$ to a quantum like model $\tilde{\bm{G}}(f):=(\Her{\cH},f(\cC),\Tr)$ satisfying the following conditions (A) in Theorem~\ref{theorem:derivation},
there must exist a measurement $\bm{M}\in\cM(\tilde{\bm{G}}(f))$ that does not belong to $\cM(\mathrm{QT})$.
However, as seen in Theorem~\ref{theorem:equivalent},
the measurement $\bm{M}$ violates the quantum bound \eqref{eq:trace-norm} in the case of $p=1/2$ if and only if $r(\bm{M})>1$,
which does not always hold even if a measurement does not belong to $\cM(\mathrm{QT})$.
Moreover, we can consider a sequence of models $\{\bm{G}_i\}_{i\in\mathbb{N}}$ whose state space $\cS(\bm{G}_i)$ converges $\cS(\bm{QT})$ with satisfying $\cS(\bm{G}_i)\subsetneq\cS(\bm{QT})$.
Theorem~\ref{theorem:derivation} ensures that there exists a measurement $\bm{M}\in\cM(\bm{G}_i)$ satisfying $r(\bm{M})>1$ for any $i$.
This is a non-trivial statement.

Here, we also remark that Thereom~\ref{theorem:derivation} is not directly shown by preceding studies about the norm determined by the error probability of state discrimination \cite{KMI2009,KNI2010-dist,NKM2010}.
In the preceding studies \cite{KMI2009,KNI2010-dist,NKM2010},
it is clarified that
the value $D(\rho_0,\rho_1)$ defined as follows for two states $\rho_i\in\cS(\bm{G})$ is a norm on any model of GPTs $\bm{G}$.
\begin{align}
	D_{\bm{G}}(\rho,\sigma):=\max_{\{e,u-e\}\in\cM(\bm{G})} \left(e(\rho_0)-e(\rho_1)\right).
\end{align}
Besides, the preceding studies \cite{KMI2009,KNI2010-dist,NKM2010} also showed the following error bound in a model $\bm{G}$.
\begin{align}\label{eq:norm}
	\mathrm{Err}(\rho,\sigma;\frac{1}{2};\bm{M})
	\ge \frac{1}{2}-\frac{1}{2}D_{\bm{G}}(\rho_0,\rho_1).
\end{align}
By applying \eqref{eq:norm},
the following relation holds in the embedding model $\bm{G}$ satisfying $\cS(\bm{G})\subset\cS(\bm{QT})$:
\begin{align}\label{eq:bound-norm}
	D_{\bm{G}}(\rho_0,\rho_1)\ge D_{\bm{QT}}(\rho_0,\rho_1)=\frac{1}{2}\|\rho_0-\rho_1\|.
\end{align}
However, the relation \eqref{eq:bound-norm} does not directly show Theorem~\ref{theorem:derivation} because the relation \eqref{eq:bound-norm} does not ensure that the equality \eqref{eq:bound-norm} never holds for a mode $\bm{G}$ except for $\bm{QT}$.
Theorem~\ref{theorem:derivation} rather shows the equality condition of \eqref{eq:bound-norm} as the following corollary.
\end{remark}

\begin{corollary}\label{cor:norm}
	For any quantum-like model $\bm{G}$ satisfying $\cS(\bm{G})\subset\cS(\bm{QT})$,
	the following conditions are equivalent:
	\begin{enumerate}
		\item $\bm{G}=\bm{QT}$.
		\item $D_{\bm{G}}(\rho_0,\rho_1)=\frac{1}{2}\|\rho_0-\rho_1\|$.
	\end{enumerate}
\end{corollary}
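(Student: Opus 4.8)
The plan is to prove the two directions by translating the distinguishability norm $D_{\bm{G}}$ into the optimal symmetric-prior error probability and then applying the eigenvalue-spread criterion of Theorem~\ref{theorem:equivalent}. The starting observation is the identity
\begin{align}
  \min_{\bm{M}\in\cM(\bm{G})}\mathrm{Err}(\rho_0;\rho_1;\tfrac{1}{2};\bm{M})
  =\frac{1}{2}-\frac{1}{2}D_{\bm{G}}(\rho_0,\rho_1),
\end{align}
obtained by writing $\bm{M}=\{e,I-e\}$ and computing $\mathrm{Err}(\rho_0;\rho_1;\tfrac{1}{2};\bm{M})=\tfrac{1}{2}-\tfrac{1}{2}\big(e(\rho_0)-e(\rho_1)\big)$, so that minimizing the error is exactly the maximization defining $D_{\bm{G}}$. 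Under this identity, condition~(2) says precisely that the optimal error equals the quantum bound $\tfrac{1}{2}-\tfrac{1}{2}\|\tfrac{1}{2}\rho_0-\tfrac{1}{2}\rho_1\|_1$ for every pair of states, i.e.\ that no measurement of $\bm{G}$ beats the standard bound \eqref{eq:trace-norm} at $p=1/2$.

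For the direction $(1)\Rightarrow(2)$ I would simply invoke the Holevo--Helstrom theorem: in $\bm{QT}$ the projector onto $(\rho_0-\rho_1)_+$ saturates \eqref{eq:trace-norm}, giving $D_{\bm{QT}}(\rho_0,\rho_1)=\tfrac{1}{2}\|\rho_0-\rho_1\|_1$, which is the content already recorded in \eqref{eq:bound-norm}.

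The substance is $(2)\Rightarrow(1)$. From the reformulation above, condition~(2) says that for every $\bm{M}\in\cM(\bm{G})$ there is no pair of states for which $\bm{M}$ beats \eqref{eq:trace-norm} at $p=1/2$; by the equivalence in Theorem~\ref{theorem:equivalent} this forces $r(\bm{M})\le 1$ for every measurement $\bm{M}\in\cM(\bm{G})$. It then remains to deduce $\cC=\Psd{\cH}$ from this spread bound together with the standing hypothesis $\cS(\bm{G})\subset\cS(\bm{QT})$, which yields $\cC\subset\Psd{\cH}$ and hence $\cC^\ast\supset\Psd{\cH}$. I would argue by contraposition on the dual cone: if $\cC\subsetneq\Psd{\cH}$ then $\cC^\ast\supsetneq\Psd{\cH}$ (by duality of closed cones), so there is some $N\in\cC^\ast\setminus\Psd{\cH}$. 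Since any interior point of $\cC$ is positive definite, $N\in\cC^\ast$ cannot be negative semidefinite; as $N\notin\Psd{\cH}$ it is therefore indefinite and may be rescaled so that $\lambda_{\mathrm{max}}(N)=1$ and $\lambda_{\mathrm{min}}(N)=-a$ with $a>0$. Then $I-N\succeq 0$, so $I-N\in\Psd{\cH}\subset\cC^\ast$, and $\{N,I-N\}$ is a legitimate measurement with $r=1+a>1$, contradicting the spread bound. Hence $\cC^\ast=\Psd{\cH}$, so $\cC=\Psd{\cH}$ and $\bm{G}=\bm{QT}$.

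I expect the dual-cone step to be the crux. The delicate point is that condition~(2), being a statement only about the symmetric prior $p=1/2$, controls exactly the eigenvalue spread $r(\bm{M})$ and nothing about $r'(\bm{M},i)$, so one cannot simply route the argument through the general-$p$ bound of Theorem~\ref{theorem:performance} nor directly verify condition~(B) of Theorem~\ref{theorem:derivation} for all $p$. The rescaling trick above is what bridges this gap, turning any beyond-quantum element of $\cC^\ast$ into an effect of spread strictly above $1$ while keeping $I-N$ positive; this is essentially the mechanism underlying Theorem~\ref{theorem:derivation}. Alternatively, once $r(\bm{M})\le 1$ has been shown to yield $\cC=\Psd{\cH}$, condition~(B) holds at every $p$ automatically and Theorem~\ref{theorem:derivation} may be cited to close the argument.
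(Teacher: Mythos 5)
Your proof is correct and follows essentially the same route as the paper: the paper proves this corollary by citing Theorem~\ref{theorem:derivation-1}, whose proof is exactly your argument (translate $D_{\bm{G}}$ into the optimal error at $p=1/2$, invoke Theorem~\ref{theorem:equivalent} to reduce to $r(\bm{M})\le1$, and run the dual-cone rescaling $M'=M/\lambda_{\mathrm{max}}(M)$ on a beyond-quantum effect to produce a measurement with $r>1$). The only difference is that you inline that proof and make explicit the identity $\min_{\bm{M}}\mathrm{Err}=\tfrac{1}{2}-\tfrac{1}{2}D_{\bm{G}}$, which the paper leaves implicit.
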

Corollary~\ref{cor:norm} is easily shown by Therem~\ref{theorem:derivation-1} in Appendix.

\section{Discussion}

This paper has dealt with imperfect state discrimination in models of GPTs,
and we have compared the performance for state discrimination
between general measurements in quantum-like models and POVMs.
We have introduced the range and the sum of eigenvalues of two-outcome measurement,
and we have given a general tight bound of the error sum by the range and the sum
(Theorem~\ref{theorem:performance}).
Besides, we have given an equivalent condition when a general measurement possesses superior performance to POVMs in the case of $p=1/2$ (Theorem~\ref{theorem:equivalent}).
As an application of the results of the performance for state discrimination in quantum-like models,
we have given a kind of the derivation of standard quantum theory out of all models of GPTs not restricted to quantum-like models (Theorem~\ref{theorem:derivation}).

Becafhe quantum bound \eqref{eq:trace-norm} derives the model of quantum theory,
the performance for state discrimination completely characterizes any other properties in standard quantum thery,
which is a surprising operational meaning of our derivation.
There exist many measures of the performance for information tasks that outperform the performance under the standard quantum theory in certain models of GPTs \cite{PR1994,Barnum.Steering:2013,Arai2019,YAH2020,NLS2022,NLS2022,PNL2023}.
However, all such known results do not completely characterize the model of quantum theory,
i.e.,
such measures sometimes behave in the same way as standard quantum theory.
Our finding is that the quantum bound of the performance for state discrimination is a completely characterizing measure.
In other words,
Theorem~\ref{theorem:derivation} operationally means that
the performance for state discrimination completely characterizes any other properties in standard quantum thery.

Finally, we give two important future directions for this work.
In this paper, we have given a derivation of standard quantum theory
as the existence of a state embedding isomorphic map satisfying the quantum bound,
an isomorphic map satisfying the conditions A and B in Theorem~\ref{theorem:derivation}.
Even if we do not restrict state embedding isomorphic maps,
no known example other than standard quantum theory
satisfies the quantum bound .
Therefore, it can be expected to remove the restriction for isomorphic maps, although such a relaxation is desired from the viewpoint of the quantum foundation.
If the relaxed statement is valid,
the bound of the performance for state discrimination can be regarded more strongly as an important physical principle.
The relaxation of the restriction for isomorphic maps
is the first important future direction of this work.

This paper has addressed only single-shot state discrimination.
In standard quantum information theory,
it is more important to clarify asymptotic behaviors of the $n$-shot case than the single-shot case.
The $n$-shot case is based on a $n$-composite system, which is difficult to deal with in GPTs because of the non-uniqueness of composite systems \cite{Janotta2014,Barrett2007}. 
Recently, the paper \cite{RLW2022} calculated the asymptotic performance for hypothesis testing with post-selection even in GPTs by applying
a result in the single shot case in general models.
However,
standard settings of hypothesis testing in GPTs, for example, 
the setting of Stein's lemma in GPTs, are still open.
This paper also has given a result in the single-shot case in general models as a general bound in Theorem~\ref{theorem:performance}.
The general bound in Theorem~\ref{theorem:performance} 
is applicable to the calculation of the performance for $n$-shot hypothesis testing, even in GPTs.
However, this application is still open because it requires various additional
calculations.
Therefore, the above required analysis
is the second important future direction of this work.

\section*{Acknowledgement}
HA appreciate Francesco Buscemi, Gen Kimura, Yui Kuramochi, Ryo Takakura, Shintaro Minagawa, and Kenji Nakanishi for giving helpful comments for the update of our results.
HA is supported by a JSPS Grant-in-Aids for JSPS Research Fellows No. JP22J14947.
MH is supported in part by the
National Natural Science Foundation of China (Grant No. 62171212).

\section*{References}

\appendix

\section{}

\subsection{Proof of Lemma~\ref{theorem:isometry}}\label{sect:A-1}

\begin{proof}
Let $\bm{G}=(\cV,\cC,u)$ be a model of GPTs satisfying $\dim{\cV}=d^2$.
Take a basis $\{x_i\}_{i=1}^{d^2}$ in $\cV$ such that
$u(x_i)=1$ for $i\neq d^2$.
Also, take a basis $\{y_i\}_{i=1}^{d^2}$ in $\Her{\cH}$ such that
$\Tr y_i=1$ for $i\neq d^2$ and $\Tr y_{d^2}=u(x_{d^2})$.
Then, we choose an linear isomorphic map $f:\cV\to\Her{\cH}$ and a constant $c$ as the transformation from the basis $\{x_i\}_{i=1}^{d^2}$ to $\{y_i\}_{i=1}^{d^2}$ and $c=1$, respectively.
Now, we need to show two things: (i) $\Tr \circ f=u$ and (ii) $f(\cC)$ is a positive cone.
By the choice of $f$,
$u(x_i)=\Tr y_i$ holds for any $i$, which shows the statement (i).
Also, because $f$ is linear, $\cC$ is convex, and $\cC$ has non-empty interior,
the set $f(\cC)$ is also convex and has non-empty interior.
Besides, the relation $f(\cC)\cap f(-\cC)=f(\cC\cap -\cC)=f(\{0\})=\{0\}$ holds because $f$ is linear.
As a result, the statement (ii) holds,
and therefore, the model $\bm{G}=(\cV,\cC,u)$ is isomorphic to a quantum-like model $(\Her{\cH},f(\cC),\Tr)$.
\end{proof}

\subsection{Proof of Theorem~\ref{theorem:performance}}\label{sect:A-2}

\begin{proof}
We need to prove three statements;
(i) the inequality \eqref{eq:performance}, (ii) the implication ``(A) $\Rightarrow$ the equality of \eqref{eq:performance}", and (iii) the opposite implication ``the equality of \eqref{eq:performance} $\Rightarrow$ (A)".

\textbf{[Proof of (i)]}
Let $\bm{M}$ be a measurement with $r(\bm{M})\le1$.
Here, we denote the positive part and the negative part of a Hermitian matrix $x$ as $x_+$ and $x_-$, respectively.
Any two Hermitian matrices $x_1,x_2$ with $\Tr x_1=\Tr x_2=1$
satisfy
\begin{align}
	\Tr (px_1-(1-p)x_2)_+ +\Tr(px_1-(1-p)x_2)_-=\Tr (px_1-(1-p)x_2)=2p-1.
\end{align}

Then, the following calculation shows the inequality \eqref{eq:performance},
\begin{align}\label{eq:proof-performance-1}
	&\mathrm{Err}(\rho_0;\rho_1;p;\bm{M})=p\Tr \rho_0 M_1+(1-p)\Tr \rho_1 M_0 \nonumber\\
	=&p\Tr \rho_0(I-M_0)+(1-p)\Tr \rho_1 M_0 \stackrel{(a)}{=} p -\Tr (p\rho_0-
	(1-p)\rho_1)M_0\nonumber\\
	=&p-\frac{1}{2}\Tr(p\rho_0-(1-p)\rho_1)2M_0\nonumber\\
	=&p-\frac{1}{2}\left(\Tr(p\rho_0-(1-p)\rho_1)M_0+\Tr(p\rho_0-(1-p)\rho_1)(I-M_1)\right)\nonumber\\
	\stackrel{(b)}{=}&\frac{1}{2}-\frac{1}{2}\left(\Tr(p\rho_0-(1-p)\rho_1)M_0-\Tr(p\rho_0-(1-p)\rho_1)M_1\right)\nonumber\\
	\stackrel{(c)}{\ge}&\frac{1}{2}-\frac{1}{2}\Bigl(
	\lambda_{\mathrm{max}}(M_0)\Tr(p\rho_0-(1-p)\rho_1)_+
	+\lambda_{\mathrm{min}}(M_0)\Tr(p\rho_0-(1-p)\rho_1)_- \nonumber\\
	&-\lambda_{\mathrm{min}}(M_1)\Tr(p\rho_0-(1-p)\rho_1)_+
	-\lambda_{\mathrm{max}}(M_1)\Tr(p\rho_0-(1-p)\rho_1)_-\Bigr)\nonumber\\
	\stackrel{(d)}{=}&\frac{1}{2}-\frac{1}{2}\Bigl(
	\lambda_{\mathrm{max}}(M_0)\Tr(p\rho_0-(1-p)\rho_1)_+
	-\lambda_{\mathrm{min}}(M_0)\left(\Tr(p\rho_0-(1-p)\rho_1)_+ -(2p-1)\right)\nonumber\\
	&+\lambda_{\mathrm{min}}(M_1)\left(\Tr(p\rho_0-(1-p)\rho_1)_- -(2p-1)\right)
	-\lambda_{\mathrm{max}}(M_1)\Tr(p\rho_0-(1-p)\rho_1)_- \Bigr)\nonumber\\
	=&\frac{1}{2}-\frac{1}{2}\Bigl(
	\Tr(p\rho_0-(1-p)\rho_1)_+r(\bm{M})-\Tr(p\rho_0-(1-p)\rho_1)_- r(\bm{M})\Bigr)\nonumber\\
&-\frac{1}{2}(2p-1)\left(\lambda_{\mathrm{min}}(M_0)-\lambda_{\mathrm{min}}(M_1)\right)\nonumber\\
	\stackrel{(e)}{=}&\frac{1}{2}-\frac{1}{2}||p\rho_0-(1-p)\rho_1||_1r(\bm{M})-\frac{1}{2}(2p-1)(r'(\bm{M},0)-1).
\end{align}
The equations $(a)$ and $(b)$ are shown by $\Tr \rho_0=\Tr \rho_1=1$.
The inequality $(c)$ holds because of the inequality
\begin{align}
	&\Tr(p\rho_0-(1-p)\rho_1)M_i\nonumber\\
	\le&\lambda_{\mathrm{max}}(M_i)\Tr(p\rho_0-(1-p)\rho_1)_+
	+\lambda_{\mathrm{min}}(M_i)\Tr(p\rho_0-(1-p)\rho_1)_-
\end{align}
holds for any $i=0,1$.
The equation $(d)$ is shown by the equation $2p-1-\Tr(x_1-x_2)_-=\Tr(x_1-x_2)_+$.
The equation $(e)$ holds because the equation $\|x\|_1=\Tr (x_+-x_-)$ holds for any Hermitian matrix $x$
and the relation $\lambda_{\mathrm{min}}(M_1)=1-\lambda_{\mathrm{max}}(M_0)$ holds.

\textbf{[Proof of (ii)]}
We assume that the condition (A).
The vector $\ket{\psi_{-}}$ belongs to 
		the eigenspace of $M_0$ with the maximum eigenvalue,
and 		$\ket{\psi_{+}}$ belongs to 
		the eigenspace of $M_0$ with the minimum eigenvalue.
Then, we have
\begin{align}
	&\Tr(p\rho_0-(1-p)\rho_1)M_0\nonumber\\
	=&\lambda_{\mathrm{max}}(M_0)\Tr(p\rho_0-(1-p)\rho_1)_+
	+\lambda_{\mathrm{min}}(M_0)\Tr(p\rho_0-(1-p)\rho_1)_-.\label{NV1}
\end{align}
Due to the equation $M_0+M_1=I$,
the vector $\ket{\psi_{-}}$ belongs to 
		the eigenspace of $M_1$ with the minimum eigenvalue,
and 		$\ket{\psi_{+}}$ belongs to 
		the eigenspace of $M_1$ with the maximum eigenvalue.
Then, we have
\begin{align}
	&\Tr(p\rho_0-(1-p)\rho_1)M_1\nonumber\\
	=&\lambda_{\mathrm{min}}(M_1)\Tr(p\rho_0-(1-p)\rho_1)_+
	+\lambda_{\mathrm{max}}(M_1)\Tr(p\rho_0-(1-p)\rho_1)_-.\label{NV2}
\end{align}
The combination of 
\eqref{NV1} and \eqref{NV2}
implies the equality of the inequality $(c)$ in \eqref{eq:proof-performance-1}. As a result, the quality of \eqref{eq:performance} holds.

\textbf{[Proof of (iii)]}
We assume the equality of \eqref{eq:performance}.
Therefore,
the equality $(c)$ holds in \eqref{eq:proof-performance-1},
i.e., we obtain the following equation:
\begin{align}
	&\Tr(p\rho_0-(1-p)\rho_1)M_0-\Tr(p\rho_0-(1-p)\rho_1)M_1\nonumber\\
	=&\lambda_{\mathrm{max}}(M_0)\Tr(p\rho_0-(1-p)\rho_1)_+
	+\lambda_{\mathrm{min}}(M_0)\Tr(p\rho_0-(1-p)\rho_1)_- \nonumber\\
	&-\lambda_{\mathrm{min}}(M_1)\Tr(p\rho_0-(1-p)\rho_1)_+
	-\lambda_{\mathrm{max}}(M_1)\Tr(p\rho_0-(1-p)\rho_1)_-.
\end{align}
As seen in the above calculation \eqref{eq:proof-performance-1}, the left-hand side is equal to $2\Tr(\rho_0-\rho_1)M_0$.
Also, the requirement of measurement $M_0+M_1=I$ implies the two equations $\lambda_{\mathrm{min}}(M_1)=1-\lambda_{\mathrm{max}}(M_0)$ and $\lambda_{\mathrm{max}}(M_1)=1-\lambda_{\mathrm{min}}(M_0)$.
Therefore, the right-hand side is equal to
\begin{align}
	2\left(\lambda_{\mathrm{max}}(M_0)\Tr(p\rho_0-(1-p)\rho_1)_+
	+\lambda_{\mathrm{min}}(M_0)\Tr(p\rho_0-(1-p)\rho_1)_-\right).
\end{align}
As a result, we obtain the following equation:
\begin{align}
	0=&\Tr(p\rho_0-(1-p)\rho_1)M_0-\lambda_{\mathrm{max}}(M_0)\Tr(p\rho_0-(1-p)\rho_1)_+\nonumber\\
	&-\lambda_{\mathrm{min}}(M_0)\Tr(p\rho_0-(1-p)\rho_1)_-\nonumber\\
	=&\Tr(p\rho_0-(1-p)\rho_1)_+(-\lambda_{\mathrm{max}}(M_0)I+M_0)\nonumber\\
	&+\Tr(p\rho_0-(1-p)\rho_1)_-(-\lambda_{\mathrm{min}}(M_0)I+M_0).\label{eq:proof-equality-1}
\end{align}
Because four relations $(\rho_0-\rho_1)_+\ge0$, $(\rho_0-\rho_1)_-\le0$, $-\lambda_{\mathrm{max}}(M_0)I+M_0\le0$, and $-\lambda_{\mathrm{min}}(M_0)I+M_0\ge0$ hold,
the equation \eqref{eq:proof-equality-1} implies the following two equations:
\begin{align}
	\Tr(p\rho_0-(1-p)\rho_1)_+(-\lambda_{\mathrm{max}}(M_0)I+M_0)&=0,\\
	\Tr(p\rho_0-(1-p)\rho_1)_-(-\lambda_{\mathrm{min}}(M_0)I+M_0)&=0.
\end{align}
These equalities imply the condition (A).
\end{proof}

\subsection{Detailed Check for the Conditions on the Example}\label{sect:A-3}

Here, we check that the tuple $(\rho_0,\rho_1,p,\bm{M})$ in the section of example satisfies the conditions in \eqref{NTV} and the equivalent condition (A) for equality of \eqref{eq:performance}.

The condition $\mathrm{Err}(\rho_0,\rho_1,\bm{M})=3/8$ is easy to check by the definition $\mathrm{Err}(\rho_0,\rho_1,\bm{M}):=\Tr p\rho_0M_1+(1-p)\rho_1M_0$.

Since the matrix $\rho_0-\rho_1$ has the spectral decomposition
\begin{align}\label{eq-proof-example-1}
		\rho_0-\rho_1
		=\frac{1}{8}
		\begin{bmatrix}
			0&0&0&0\\
			0&1/2&-1/2&0\\
			0&-1/2&1/2&0\\
			0&0&0&0
		\end{bmatrix}
		-
		\frac{1}{8}
		\begin{bmatrix}
			0&0&0&0\\
			0&1/2&1/2&0\\
			0&1/2&1/2&0\\
			0&0&0&0
		\end{bmatrix},
\end{align}
the condition $\|p\rho_0-(1-p)\rho_1\|_1=\frac{1}{2}\|\rho_0-\rho_1\|=1/8$ is obtained.

Since the matrix $M_1$ satisfies $\rank(M_1)=1$ and $\Tr M_1=2$,
the relations $\lambda_{\mathrm{max}}(M_1)=2$ and $\lambda_{\mathrm{min}}(M_1)=0$ hold.
Therefore, the condition $r(\bm{M})=2$ holds.
As a result, we complete the check of conditions \eqref{NTV}.

Finally, we check the equivalent condition (A) for equality of \eqref{eq:performance}.
As seen in the spectral decomposition \eqref{eq-proof-example-1}
of $(p\rho_0-(1-p)\rho_1)$,
we find that 
the vector $\ket{\psi_+}$ is 
$(0,\frac{1}{\sqrt{2}},-\frac{1}{\sqrt{2}},0)^T$,
and
the vector $\ket{\psi_-}$ is 
$(0,\frac{1}{\sqrt{2}},\frac{1}{\sqrt{2}},0)^T$.
The eivenvector of $M_0$ with the eigenvalue $-1$ is
$(0,\frac{1}{\sqrt{2}},-\frac{1}{\sqrt{2}},0)^T$.
The eigenspace of $M_0$ with the eigenvalue $1$ is
the orthogonal space to 
$(0,\frac{1}{\sqrt{2}},-\frac{1}{\sqrt{2}},0)^T$.
Since 
the vector $\ket{\psi_-}$ belongs to 
the eigenspace of $M_0$ with the maximum eigenvalue
and 
the vector $\ket{\psi_+}$ is the 
the eigenvector of $M_0$ with the minimum eigenvalue,
we obtain the condition (A).

\subsection{Proof of Theorem~\ref{theorem:equivalent}}\label{sect:A-4}

\begin{proof}
The statement 1$\Rightarrow$2 is implied by Theorem~\ref{theorem:performance}.
We will show the statement 2$\Rightarrow$1 in the case of $p=\frac{1}{2}$.

Let $\bm{M}=\{M_0,M_1\}$ be an arbitrary measurement in $\bm{G}$ satisfying the condition 2.
The Hermitian matrix $M_0$ can be written as $M_0=\sum_{i=1}^d \lambda_i E_i$ by the spectral decomposition in the descending order of the eigenvalnes.
The matrix $M_1$ is also written as $M_1=I-M_0=\sum_{i=1}^d (1-\lambda_i)E_i$.
Due to the condition 2,
the inequality $r(\bm{M})=\lambda_d-\lambda_1>1$ holds.

Next, as a preliminary,
we will choose a number $\epsilon>0$ satisfying a certain property as follows.
Because $\cC$ is a positive cone,
there exists an inner point $x_0\in\cC$.
In other words,
there exists a number $\epsilon>0$ such that the $\epsilon$-neighborhood $N_\epsilon(x_0):=\{x\in\cV\mid ||x_0-x||_2\le \epsilon \}$ is contained by $\cC$.
Therefore, the set $S_\epsilon(x_0):=\{\rho\in N_\epsilon(x_0)\mid \Tr \rho=1\}$ is contained by the state space $\cS(\bm{G})$.

Next,
in order to choose two states $\rho_0$ and $\rho_1$,
we take two elements $\delta_1$ and $\delta_2$ such that
\begin{align}
\begin{aligned}
	\|x_0 +\delta_1 E_1 -\delta_1 E_d\|_2&\le\epsilon,\\
	\|x_0 -\delta_2 E_1 +\delta_2 E_d\|_2&\le\epsilon.
\end{aligned}
\end{align}
Then, 
by taking $\delta_0:=\min\{\delta_1,\delta_2\}$,
the two elements $x_0 +\delta_0 E_1 -\delta_0 E_d$ and $x_0 -\delta_0 E_1 +\delta_0 E_d$ belong to $N_\epsilon(x_0)$.
Therefore, the following two elements belong to $S_\epsilon(x_0)$:
\begin{align}
\begin{aligned}
	\rho_0:&=\frac{1}{\Tr x_0} \left( x_0+\delta_0 E_1 -\delta_0 E_d\right),\\
	\rho_1:&=\frac{1}{\Tr x_0} \left(x_0 -\delta_0 E_1 +\delta_0 E_d\right).
\end{aligned}
\end{align}

Finally, the following calculation shows that two states $\rho_0$ and $\rho_1$ satisfy the desirable inequality \eqref{eq:supremacy}.
\begin{align}
	&\mathrm{Err}(\rho_0;\rho_1;p;\bm{M})=\frac{1}{2}\Tr \rho_0 M_1+\frac{1}{2}\Tr \rho_1 M_0\nonumber\\
	=&\frac{1}{2}\Tr \rho_0 M_1 +\frac{1}{2}\Tr \rho_1(I-M_1)
	=\frac{1}{2}-\frac{1}{2}\Tr (\rho_1-\rho_0)M_1\nonumber\\
	=&\frac{1}{2} -\Tr \delta_0(E_d-E_1)M_1
	=\frac{1}{2}-\Tr \delta_0 (\lambda_d-\lambda_1)\nonumber\\
	\stackrel{(a)}{=}&\frac{1}{2}-\frac{1}{4}||\rho_0-\rho_1||_1(\lambda_d-\lambda_1)
	\stackrel{(b)}{<}1-\frac{1}{2}||\frac{1}{2}\rho_0-\frac{1}{2}\rho_1||_1.
\end{align}
The equation $(a)$ is shown by $||\rho_0-\rho_1||_1=4\delta_0$.
The inequality $(b)$ holds because $\lambda_d-\lambda_1>1$.
As a result, we complete the proof of 2$\Rightarrow$1.
\end{proof}

\subsection{Proof of Theorem~\ref{theorem:derivation}}\label{sect:A-5}

For the proof of Theorem~\ref{theorem:derivation},
we give the following theorem.

\begin{theorem}\label{theorem:derivation-1}
	Let $\bm{G}=(\Her{\cH},\Tr,\cC,I)$ be a quantum-like model of GPTs.
	Under the condition $\cS(\bm{G})\subset\cS(\bm{QT})$,
	the following conditions are equivalent:
	\begin{enumerate}
		\item $\cC=\Psd{\cH}$, i.e., $\bm{G}=\bm{QT}$.
		\item Any two states $\rho_0,\rho_1\in\cS(\bm{G})$ and any measurement $\bm{M}\in\cM(\bm{G})$ satisfy
		\begin{align}\label{eq:derivation}
			\mathrm{Err}(\rho_0;\rho_1;\frac{1}{2};\bm{M})\ge\frac{1}{2}-\frac{1}{2}\|\frac{1}{2}\rho_0-\frac{1}{2}\rho_1\|_1.
		\end{align}
	\end{enumerate}
\end{theorem}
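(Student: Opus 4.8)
The plan is to prove the two implications separately, with essentially all of the work living in $2\Rightarrow1$; the implication $1\Rightarrow2$ is immediate, since once $\cC=\Psd{\cH}$ the model is exactly $\bm{QT}$, its two-outcome measurements are POVMs with $0\le M_0\le I$, hence $r(\bm{M})\le1$, and so Theorem~\ref{theorem:performance} (equivalently the standard bound \eqref{eq:trace-norm}) gives \eqref{eq:derivation}.

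For $2\Rightarrow1$ I would first translate the operational hypothesis into a spectral one using Theorem~\ref{theorem:equivalent}. Reading that theorem contrapositively and quantifying over measurements, condition 2 holds for every pair of states and every measurement if and only if no measurement $\bm{M}\in\cM(\bm{G})$ satisfies $r(\bm{M})>1$; that is, every two-outcome measurement of $\bm{G}$ obeys $\lambda_{\mathrm{max}}(M_0)-\lambda_{\mathrm{min}}(M_0)\le1$. It therefore suffices to show that this eigenvalue-range constraint forces $\cC=\Psd{\cH}$.

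The next step is a duality argument. Because $\Tr$ is an order unit of $\cC^\ast$, every nonzero $x\in\cC$ has $\Tr x>0$, so rescaling by the trace shows that $\cS(\bm{G})\subset\cS(\bm{QT})$ upgrades to the cone inclusion $\cC\subset\Psd{\cH}$; dualizing and using self-duality of $\Psd{\cH}$ gives $\Psd{\cH}\subset\cC^\ast$. Arguing by contradiction, suppose $\cC\neq\Psd{\cH}$, so that $\cC\subsetneq\Psd{\cH}$. Since both are closed convex cones and $\cC=\cC^{\ast\ast}$, the inclusion $\Psd{\cH}\subset\cC^\ast$ must be strict, so there exists an effect $E\in\cC^\ast\setminus\Psd{\cH}$. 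Being in $\cC^\ast$ but not positive semidefinite, $E$ has $\lambda_{\mathrm{min}}(E)<0$; moreover $\lambda_{\mathrm{max}}(E)>0$, for otherwise $E\le0$ would give $-E\in\Psd{\cH}\subset\cC^\ast$ alongside $E\in\cC^\ast$, forcing $E=0$ by pointedness of $\cC^\ast$ and contradicting $E\notin\Psd{\cH}$.

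Finally I would manufacture a forbidden measurement. Set $M_0:=E/\lambda_{\mathrm{max}}(E)\in\cC^\ast$; then $\lambda_{\mathrm{max}}(M_0)=1$ and $\lambda_{\mathrm{min}}(M_0)<0$, so $I-M_0\ge0$, whence $I-M_0\in\Psd{\cH}\subset\cC^\ast$ and $\bm{M}=\{M_0,I-M_0\}$ is a legitimate element of $\cM(\bm{G})$ with $r(\bm{M})=1-\lambda_{\mathrm{min}}(M_0)>1$. By Theorem~\ref{theorem:equivalent} this $\bm{M}$ admits a pair of states violating \eqref{eq:derivation}, contradicting condition 2; hence $\cC=\Psd{\cH}$. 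The main obstacle is the duality step in the third paragraph: converting the operational hypothesis into cone language and extracting a non-positive effect $E\in\cC^\ast$, where one must invoke closedness of $\cC$ so that $\cC=\cC^{\ast\ast}$ and the strict inclusion transfers to the dual. The concluding construction $M_0=E/\lambda_{\mathrm{max}}(E)$ and the appeal to Theorem~\ref{theorem:equivalent} are then routine.
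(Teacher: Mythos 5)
Your proposal is correct and follows essentially the same route as the paper's proof: the implication $1\Rightarrow2$ is the standard quantum bound, and for $2\Rightarrow1$ both arguments pass from $\cS(\bm{G})\subset\cS(\bm{QT})$ to $\cC\subsetneq\Psd{\cH}$, dualize to extract a non-positive-semidefinite effect $E\in\cC^\ast$, rescale it to $M_0=E/\lambda_{\mathrm{max}}(E)$ to build a valid measurement $\{M_0,I-M_0\}$ with $r(\bm{M})>1$, and invoke Theorem~\ref{theorem:equivalent}. The only cosmetic differences are that you justify $\lambda_{\mathrm{max}}(E)>0$ via pointedness of $\cC^\ast$ rather than via $\Tr\rho E\ge0$ on $\cC$, and you make explicit the closedness needed for the bidual step, which the paper leaves implicit.
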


\begin{proof}[Proof of Theorem~\ref{theorem:derivation-1}]
	The statement 1$\Rightarrow$2 holds because the inequality \eqref{eq:derivation} is shown in standard quantum information theory (for example, \cite{Holevo1972,HelstromBook1976,HayashiBook2017}).
	We will show the statement 2$\Rightarrow$1 by contraposition.
	
	Let $\bm{G}=(\Her{\cH},\Tr,\cC,I)$ be a model of GPTs satisfying
	$\cS(\bm{G})\subset\cS(\bm{QT})$.
	The condition $\cS(\bm{G})\subset\cS(\bm{QT})$ implies $\cC\subset\Psd{\cH}$,
	and therefore, we obtain
	$\cC^\ast\supset\Psd{\cH}$ because the relation $\cC_1\subset\cC_2$ is equivalent to the relation $\cC_1^\ast\supset\cC_2^\ast$ for any two positive cones $\cC_1,\cC_2$.
	To show the contraposition of the implication 2$\Rightarrow$1,
	we assume that $\cC\neq\Psd{\cH}$ and equivalently $\cC^\ast\supsetneq\Psd{\cH}$.
	Therefore, there exists an element $M\in\cC^\ast\setminus\{0\}$ such that $M\not\in\Psd{\cH}$.
	In other words, the inequality $\lambda_{\mathrm{min}}(M)<0$ holds.
	Because $\Tr \rho M\ge0$ for any $\rho\in\cC\subset\Psd{\cH}$,
	the inequality $\lambda_{\mathrm{max}}(M)>0$ holds.
	Then, a Hermitian matrix $M'$ defined as
	$M':=M/\lambda_{\mathrm{max}}(M)$ satisfies
	\begin{align}
		\lambda_{\mathrm{max}}(M')=1, \quad \lambda_{\mathrm{max}}(M')<0,
	\end{align}
	which implies $r(\bm{M})>1$.
	Also, the matrix $I-M'$ satisfies
	\begin{align}
		\lambda_{\mathrm{max}}(I-M')=0, \quad \lambda_{\mathrm{max}}(I-M')>1,
	\end{align}
	which implies $I-M'\in\Psd{\cH}$.
	Therefore, the family $\{M',I-M'\}$ belongs to $\cM(\bm{G})$.
	Hence, Theorem~\ref{theorem:equivalent} ensures that there exists two states $\rho_0,\rho_1\in\cS(\bm{G})$ satisfying the inequality \eqref{eq:supremacy} for the measurement $\{M',I-M'\}$,
	which implies that condition 2 in Theorem~\ref{theorem:derivation} does not hold.
	As a result, we complete the proof of statement 1$\Rightarrow$2 by contraposition.
\end{proof}

By applying Theorem~\ref{theorem:derivation-1},
we prove Theorem~\ref{theorem:derivation} as follows.

\begin{proof}[Proof of Theorem~\ref{theorem:derivation}]
	If  $\dim(\cV)\neq d^2$, both of the two conditions in Theorem~\ref{theorem:derivation} are false,
	i.e., they are equivalent.
	Therefore, we need to prove the statement in the case of $\dim(\cV)=d^2$.
	
	First, we prove the implication $1\Rightarrow2$.
	Because of the condition $1$,
	there exists an isomorphic map $f:\cV\to\Her{\cH}$ from $\bm{G}$ to $\bm{QT}$.
	By the map $f$,
	the transformed model $\tilde{\bm{G}}(f)$ is $\bm{QT}$,
	and therefore, the condition $2$ holds as the quantum bound.
	
	Second,
	we prove the implication $2\Rightarrow1$ by the contraposition.
	Therefore, we assume that there does not exists an isomorphic map from $\bm{G}$ to $\bm{QT}$.
	We need to show there does not exists an isomorphic map $f$ satisfying both of the conditions A. and B.
	Without loss of generality,
	we consider an arbitrary isomorphic map $f$ satisfying the condition A., and we need to show that the map $f$ never satisfies the condition B.
	Because $f$ is an isomorphic map satisfying $\cS(\tilde{\bm{G}}(f))\subset\cS(\bm{QT})$ and $\tilde{\bm{G}}(f)$ is not $\bm{QT}$,
	Theorem~\ref{theorem:derivation-1} ensures that there exists a tuple of states, a probability, and a measurement breaking the quantum bound.
	In other words, the condition B. does not hold.
	As a result,
	the implication $2\Rightarrow1$ has been proven,
	and the proof is finished.
\end{proof}

\end{document}